\providecommand{\U}[1]{\protect\rule{.1in}{.1in}}
\newtheorem{theorem}{Theorem}
\newtheorem{corollary}{Corollary}
\newtheorem{lemma}{Lemma}
\newtheorem{remark}{Remark}
\begin{document}

\title{Beyond the Bethe Free Energy of LDPC Codes via Polymer Expansions}

\author{\authorblockN{Nicolas Macris and Marc Vuffray}
\authorblockA{LTHC-IC-EPFL\\
Lausanne, Switzerland\\
nicolas.macris@epfl.ch, marc.vuffray@epfl.ch}}
%

\maketitle
%

\begin{abstract}%
The loop series provides a formal way to write down corrections to the Bethe entropy (and/or free energy)
of graphical models. We provide methods to rigorously control such expansions for low-density parity-check codes used over a 
highly noisy binary symmetric channel. We prove that
in the asymptotic limit of large size, with high probability, the Bethe expression gives an exact formula for the entropy (per bit) of the input word
conditioned on the output of the channel. Our methods also apply to more general models.

\end{abstract}%

\section{Introduction\label{sec:introduction}}

Often one needs to compute the free energy and/or entropy of a graphical model. 
The Bethe approximation and the related Belief Propagation (BP) equations may sometimes 
offer a good starting point. However it is seldom a controlled approximation and even worse it is usually not clear if it
yields upper or lower bounds, or even if there is any such relationship. There are not many results that precisely pinpoint the relation between the Bethe 
and true free energies or entropies. A general result of Vontobel \cite{Vontobel} relates the Bethe free energy to an average of the true free energy over all graph covers.
For Ising-like graphical models with attractive pair interactions, 
Wainwright \cite{Wainwright} has shown that, under additional special conditions, the Bethe free energy is a bound to the true free energy. This work 
uses the same loop series used here. 
It is well known that the Bethe free energy is exact on 
trees, and it is natural to investigate its possible exactness on random Erdoes-R\'enyi type graphs which are known to be locally tree-like. But we already know 
of systems, such as random constraint satisfaction models (e.g, $K$-SAT or $Q$-coloring) or spin glasses, where the true free energy {\it is not} given by the Bethe formula - 
even when averaged over the graph ensemble. The local tree-like nature of the graph is not sufficient when long ranged correlations are present \cite{mezard-montanari}.

For graphical models that describe 
communication with low density (parity-check and generator-matrix) codes over binary-symmetric memoryless channels the situation 
is favorable. Indeed we have plenty of evidence 
that the replica-symmetric solution\footnote{Replica-symmetric formulas are averaged forms of the Bethe 
formulas, where the average is over the channel output realizations and code ensemble.} is exact. See \cite{Montanari}, \cite{Macris-1}, \cite{Kudekar-Macris} 
for bounds and \cite{Rudiger}, \cite{Korada-Macris-Kudekar} for results on the binary erasure channel. In \cite{kudekar-Macris-SIAMpaper} it is proven 
that correlations between 
pairs of distant (with respect to Tanner graph distance) bits decay exponentially fast for LDGM codes in the regime of large noise, and LDPC codes in 
the regime of small noise. This also allowed to conclude that 
the replica symmetric formulas are exact in these regimes. 

A few years ago Chertkov and Chernyak \cite{Chertkov} developed a loop series representation for the free energy of graphical models. The virtue of 
this representation is that it isolates 
the Bethe contribution, and represents the {\it remainder} by a series of terms involving only BP messages associated to {\it generalized loops} of the graph. 
It is tempting to use this representation as a tool to compare the true and Bethe free energies. 


In this contribution we consider regular LDPC$(l,r)$ codes used over a {\it highly noisy} BSC. Consider 
the conditional entropy $\frac{1}{n}H(\underline X\vert \underline Y)$ of the input word $\underline X=(X_1\cdots X_n)$ given 
a channel output $\underline Y=(Y_1\cdots Y_n)$. 
We prove that in the large size limit, with high probability with respect to the code ensemble, the difference between the conditional entropy and the Bethe formula tends to zero.
The error term essentially comes from the probability that the graph is not locally tree-like. Our techniques also allow to organize 
the dominant correction terms into a {\it polymer expansion}\footnote{See \cite{Brydges} for a pedagogical introduction to polymer expansions.} involving generalized loops of 
size less than $\lambda_0 n$ ($0<\lambda_0<1$ a constant). 
As we will show, expander arguments imply that this polymer expansion 
converges uniformly in $n$. When the terms of the polymer expansion are added to the Bethe expression, with high probability,
the difference with the conditional entropy becomes $O(e^{-n \epsilon})$ for some $\epsilon>0$. 

Our results also apply to more general models. Namely the channel could have asymmetric flip probability. In fact the whole technique 
and results apply to spin-glass models on $(l,r)$ Tanner graphs with $l$ odd and $l<r$, with {\it small 
magnetic fields}, and {\it any temperature}.The 
limitation to $l<r$ is not just technical. Indeed $l>r$ would correspond
to a kind of XORSAT constraint satisfaction problem, and for the usual XORSAT problem we know that the replica symmetric solutions 
are not generally exact at low temperatures.

The case $l=2$ (cycle codes) has its own special features and has been discussed in \cite{Macris Vuffray}.

\section{Preliminaries}\label{sec2}

We begin with a few definitions and notations. Fix two integers 
$l<r$. Consider two vertex 
sets: $V$ a set of $n$ 
{\it variable nodes} and $C$  a set of $m=n\frac{l}{r}$ {\it check nodes}. 
We think of $n$ large and $l,r$ fixed. 
We consider bipartite $(l,r)$ regular graphs - call them $\Gamma$ - connecting $V$ and $C$. The set of edges is $E$. More precisely, vertices of $V$ have degree $l$, vertices of $C$ have degree $r$,
and there are no double edges. The set of all such graphs is denoted $\mathcal{B}(l,r,n)$. Note that $\Gamma$ is the Tanner graph of a LDPC code with design rate $1-l/r$.
 When we say that $\Gamma$ is random we mean that we 
draw it uniformly randomly from the set $\mathcal{B}\left(l,r,n\right)$. The corresponding expectation is $\mathbb{E}_\Gamma$. 

Letters $i,j$ will always denote nodes 
in $V$ and letters $a,b$ nodes in $C$.
We reserve the notations 
$\partial i$ (resp. $\partial a$) for the sets of neighbors of $i$ (resp. $a$) in $\Gamma$.

We will say that $\Gamma$ is a $(\lambda,\kappa)$ expander if for every subset $\mathcal{V}\subset V$ such that $\left\vert
\mathcal{V}\right\vert <\lambda n$ we have $\left\vert \partial\mathcal{V}
\right\vert \geq\kappa l\left\vert \mathcal{V}\right\vert $. Here $\partial\mathcal{V}$ is the number of check nodes that are connected to $\mathcal{V}$.
Take a random $\Gamma$. We can always find $\lambda >0$ such that with probability 
$1- O(n^{-(l(1-\kappa) -1)})$, $\Gamma$ is a $(\lambda,\kappa)$ expander with $\kappa< 1-\frac{1}{l}$. It is sufficient  to take  
$0<\lambda <\lambda_{0}$ where $\lambda_{0}$ is 
the positive solution of the equation\footnote{See e.g \cite{Rudiger} where the standard LDPC$(l,r,n)$ ensemble is considered. It is easily argued that the same result 
applies to $\mathcal{B}\left(l,r,n\right)$.}
\begin{equation}
 \frac{l-1}{l} h_2(\lambda_0) - \frac{l}{r} h_2(\lambda_0\kappa r) -\lambda_0\kappa r h_2(\frac{1}{\kappa r}) = 0\,.
\end{equation}
As will be seen later we need to take $\kappa\in ]1-\frac{2(r-1)}{lr}, 1-\frac{1}{l}[$ (which is always possible for $r>2$). In the rest of the paper
$\kappa$ is always a constant in this interval, and $0<\lambda<\lambda_0$.
For concreteness, one can take the case
$\left(  l,r\right)  =\left(  3,6\right)$, fix $\kappa =1/2$ and $\lambda_0 = 5\times 10^{-4}$.

Assume that we transmit (with uniform prior) code words from an LDPC code with Tanner graph $\Gamma$ over a BSC with flip probability $p$. 
We assume without loss of generality
that the all zero codeword is transmitted. 
Then the posterior probability that $\underline x=(x_i)_{i =1}^n\in \{0,1\}^n$ is the transmitted word given that 
$\underline y =(y_i)_{i=1}^n\in \{0,1\}^n$ is received, reads 
\begin{equation}
p_{\underline{X}|\underline{Y}}\left(  \underline{x}|\underline{y}\right)
= \frac{1}{Z}\prod_{a\in C}\mathbb{I}\left(  \oplus_{i\in\partial a}x_{i}=0\right)
\prod_{i\in V}\exp(( -1)^{x_{i}} h_i  )\,. \label{eq:ldpc_dist}
\end{equation}
The graph $\Gamma$ enters in this formula through the parity check constraints. 
In this formula $h_i = (-1)^{y_i}\frac{1}{2}\ln\frac{1-p}{p}$ and $Z$ is the normalizing factor
\begin{equation}
 Z = \sum_{\underline x\in\{0,1\}^n}\prod_{a\in C}\mathbb{I}\left(  \oplus_{i\in\partial a}x_{i}=0\right)
\prod_{i\in V}\exp((  -1)^{x_{i}} h_i) .
\end{equation}
We set 
\begin{equation}
 h=\frac{1}{2}\ln\frac{1-p}{p}
\end{equation}
It is good to keep in mind that the high noise regime considered in this paper corresponds to small $h$ ($p$ close to $1/2$) and that $\vert h_i\vert =h$.

It is equivalent to describe the channel outputs in terms of $\underline y$ or in terms of the half-log-likelihood variables $\underline h=(h_i)_{i=1}^n$.
Note that $h_i$ have the probability distribution $c(h_i) = (1-p)\delta(h_i - \ln\frac{1-p}{p}) + p\delta(h_i - \ln\frac{p}{1-p})$. The expectation with respect to this 
distribution is called $\mathbb{E}_{\underline h}$.
We are interested in the conditional entropy $H(\underline X\vert \underline Y)$ of the input word given the output word. We have 
(see e.g, \cite{mezard-montanari})
\begin{equation}\label{entro-free}
\mathfrak{h}_n\equiv\frac{1}{n}H\left(  \underline{X}|\underline{Y}\right)  = \frac{1}{n}\mathbb{E}
_{\underline{h}}\left[\ln Z\right]  - \frac{1-2p}{2}\ln\frac{1-p}{p} .
\end{equation}
In \eqref{entro-free}, $n^{-1}\ln Z$ is the {\it free energy} of the Gibbs measure \eqref{eq:ldpc_dist}.

\section{The Bethe Approximation}

The Bethe free energy 
 involves a set of 
messages $\left\{  \eta_{i\to a},\widehat{\eta}_{a\to i}\right\}$ attached to the edges of $\Gamma$. The collection of all messages 
is denoted $(\underline\eta, \underline{\widehat{\eta}})$.
These satisfy the BP equations
\begin{align}
\begin{cases}
\eta_{i\rightarrow a}  &  =h_{i}+\sum_{ b\in \partial i\backslash
a}\widehat{\eta}_{b\rightarrow i}\nonumber\\
\widehat{\eta}_{a\rightarrow i}  &  =\tanh^{-1}\bigl(  \prod_{j\in \partial a\setminus i}
\tanh \eta_{j\rightarrow a}  \bigr)  . \label{eq:bp_equations}
\end{cases}
\end{align}
These equations always have a trivial solution $\tanh\eta_{i\to a} = \tanh\widehat{\eta}_{a\to i}$=1. We will consider only 
non-trivial solutions that are relevant for small $h$. For these solutions $\eta_{i\to a}$ and $\widehat{\eta}_{a\to i}$ take small values and we can show that
$\vert\eta_{i\to a}\vert\leq \vert h\vert + (l-1)\vert h\vert^{r-1} +O(\vert h\vert^r)$ and $\vert\widehat{\eta}_{a\to i}\vert \leq \vert h\vert^{r-1} + O(\vert h\vert^r)$. We call such solutions {\it high-noise-solutions}.

These solutions have a Bethe free energy 
\begin{equation}
f_{\mathrm{Bethe}}\left(  \underline\eta,\underline{\widehat{\eta}}\right)  =\frac{1}{n}\biggl(\sum_{a\in
C}F_{a}+\sum_{i\in V}F_{i}-\sum_{\left(  i,a\right)  \in
E}F_{ia}\biggr), \label{eq:fbethe}
\end{equation}
where
\begin{align}
\begin{cases}
F_{a}    =\ln\frac{1}{2}(  1+\prod_{i\in\partial a}\tanh  \eta_{i\rightarrow
a})  +\sum_{i\in\partial a}\ln2\cosh  \eta_{i\rightarrow
a}  ,\nonumber\\
F_{i}    =\ln2\cosh\left(  h_{i}+\sum_{a\in\partial i}\widehat{\eta
}_{a\rightarrow i}\right)  ,\nonumber\\
F_{ia}    =\ln2\cosh\left(  \eta_{i\rightarrow a}+\widehat{\eta
}_{a\rightarrow i}\right)  . \label{eq:fi_fa_fia}
\end{cases}
\end{align}

\begin{theorem}\label{theorem1}
Suppose $l$ is odd and $3\leq l\leq r$. There exists $h_0>0$ (small) independent of $n$, 
such that for $\vert h\vert\leq h_0$ and any high-noise-solution 
$\left(  \underline\eta,\underline{\widehat{\eta}}\right)$ of the BP equations,
\begin{equation}\label{first-result}
\mathbb{E}_\Gamma[\vert \frac{1}{n}\ln Z - f_{\mathrm{Bethe}}\left(  \underline\eta,\underline{\widehat{\eta}}\right)\vert] = O\bigl(\frac{1}{n^{l(1-\kappa)-1}}\bigr)\,.
\end{equation}
The $O(\cdot)$ is uniform in the channel output realizations $\underline h$. 
\end{theorem}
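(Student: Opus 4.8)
The plan is to start from the Chertkov--Chernyak loop series, which writes $Z$ as the Bethe partition function times a correction indexed by \emph{generalized loops} of $\Gamma$ --- subgraphs $g$ in which every incident vertex has degree at least two. Setting $Z_{\mathrm{Bethe}}=e^{n f_{\mathrm{Bethe}}}$ for the chosen high-noise-solution, this yields the exact identity $\frac1n\ln Z - f_{\mathrm{Bethe}}(\underline\eta,\underline{\widehat\eta}) = \frac1n\ln\bigl(1+\sum_{g}w(g)\bigr)$, where each weight $w(g)$ factorizes into a product of local factors, one per vertex of $g$, built from the BP messages. Since $1+\sum_g w(g)=Z/Z_{\mathrm{Bethe}}>0$, the logarithm is always well defined, and the whole problem reduces to controlling this loop correction.

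The first technical step is a weight bound valid in the high-noise regime. Using the message estimates $|\eta_{i\to a}|\le|h|+(l-1)|h|^{r-1}+O(|h|^r)$ and $|\widehat\eta_{a\to i}|\le|h|^{r-1}+O(|h|^r)$, I would evaluate the magnitudes of the local factors: a variable vertex of loop-degree $d$ contributes a factor of size $O(|h|)$ when $d$ is odd and $O(1)$ when $d$ is even, while a check vertex of loop-degree $d$ contributes a factor of size $O(|h|^{\,r-d})$. This is where the hypotheses enter: because $l$ is odd, a variable that saturates its degree ($d=l$) is forced to contribute a small factor, and because $l<r$ a check can contribute an $O(1)$ factor only when all $r$ of its edges lie inside the loop. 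Collecting these, I obtain $|w(g)|\le K^{|g|}|h|^{\beta(g)}$, where $\beta(g)$ is the accumulated power of $|h|$.

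The heart of the argument is then to show $\beta(g)\ge\alpha|g|$ for some constant $\alpha>0$, uniformly over generalized loops on a variable set $\mathcal V$ with $|\mathcal V|<\lambda n$. Here the expander property enters through the counting estimate that, on a $(\lambda,\kappa)$ expander, such a loop can use at most $l(1-\kappa)|\mathcal V|$ check vertices; the loop is therefore necessarily sparse and cannot have all its checks fully internal and all its variable degrees even at once. The interval $\kappa\in\,]1-\frac{2(r-1)}{lr},\,1-\frac1l[$ is tuned precisely so that this sparsity forces enough odd-degree variables and non-saturated checks to close the bound $\beta(g)\ge\alpha|g|$. I expect this combinatorial lemma --- matching the factor magnitudes against the expander edge-count so that the $\kappa$-window makes the estimate work --- to be the main obstacle.

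With the weight bound established, I would split $\mathbb{E}_\Gamma$ over the event $\mathcal E$ that $\Gamma$ is a $(\lambda,\kappa)$ expander and its complement. On $\mathcal E^c$ I bound $|\frac1n\ln Z-f_{\mathrm{Bethe}}|$ by a constant uniform in $\Gamma$ and $\underline h$ (both free energies are $O(1)$), so this contributes $O(1)\cdot\mathbb{P}(\mathcal E^c)=O(n^{-(l(1-\kappa)-1)})$, the stated dominant error. On $\mathcal E$ I organize $\ln(1+\sum_g w(g))$ as a polymer partition function over connected generalized loops and run a cluster expansion; the weight bound, together with the bounded branching of the $(l,r)$-regular graph, verifies a Koteck\'y--Preiss-type condition, so the expansion converges uniformly in $n$ and $|\frac1n\ln(1+\sum_g w)|\le\frac{C}{n}\sum_{\gamma}|w(\gamma)|$ over connected loops $\gamma$. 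Taking expectations, the expected total loop weight is bounded --- it is dominated by simple cycles, whose expected number grows only geometrically in the length while their weights decay geometrically faster --- so the $\mathcal E$ contribution is $O(1/n)$. Since the lower end of the $\kappa$-interval exceeds $1-\frac2l$, we have $l(1-\kappa)-1<1$, whence $O(1/n)$ is dominated by the $\mathcal E^c$ term, giving the claimed bound.
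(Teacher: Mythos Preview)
Your outline captures the paper's small-polymer machinery correctly, but it has a genuine gap: you treat the \emph{entire} loop correction $\ln\bigl(1+\sum_g w(g)\bigr)$ by a single cluster expansion on the expander event, whereas the activity bound $|w(\gamma)|\le (Ch^{\alpha})^{|\gamma|}$ you derive is only valid for \emph{small} polymers with $|\gamma\cap V|<\lambda n$. The expander inequality $|\partial\mathcal V|\ge \kappa l|\mathcal V|$ is an assertion about small vertex sets; once $|\gamma\cap V|\ge \lambda n$ it says nothing, and such large generalized loops do exist (even on an expander) and can have activity of order one --- for instance, loops in which most checks have full induced degree $r$ and most variables have even induced degree. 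For these the Koteck\'y--Preiss criterion fails, and the paper states explicitly that the large-polymer contribution ``cannot be expanded into an absolutely convergent series, and has to be treated non-perturbatively by counting methods.''

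The missing ingredient is the paper's separate first-moment estimate on large loops (its Lemma in Section~\ref{sec:probability on graphs}): one writes $\sum_g K(g)=Z_p+R$ with $Z_p$ containing only small polymers and $R$ containing at least one large polymer, and shows $\mathbb{P}[\,|R|\ge\delta\,]\le \delta^{-1}e^{-Cn}$ by bounding $\mathbb{E}_\Gamma[\,|\Omega_\Gamma(\underline n,\underline m)|\,]$ via McKay's subgraph-counting bounds and playing this against the activity bound $\overline K(\underline n,\underline m)$. It is precisely \emph{here}, not in the expander step, that the hypotheses ``$l$ odd'' and ``$l<r$'' are used: they ensure that on the subdomain of types where $\overline K$ is not small the number of subgraphs is subexponential, so the product is $e^{-Cn}$. (The convergence lemma for small polymers holds for any $r>2$, regardless of the parity of $l$.) Once $R$ is controlled this way, one combines $\frac1n\ln Z_p=O(n^{-(l(1-\kappa)-1)})$ (your polymer expansion, restricted to small polymers, plus a local tree-like conditioning) with the exponential smallness of $R/Z_p$ to conclude. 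Your final paragraph's claim that ``the expected total loop weight is bounded --- it is dominated by simple cycles'' is exactly the statement that needs the large-loop lemma to justify; without it, the step does not close.
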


\begin{remark}
 By Markov's bound we obtain that the difference between the true and Bethe free energies tends to zero with high probability, in the 
$n\to +\infty$ limit.
\end{remark}

\begin{remark}
We can average equation \eqref{first-result} over the channel output and use \eqref{entro-free} to relate the true and Bethe entropies.
\end{remark}

\section{Loop Corrections to the Bethe Approximation}
 
We define a {\it generalized loop} $g$ as any subgraph 
contained in $\Gamma$ with no dangling edges (figure \ref{fig:polymers}). 
Note that a 
generalized loop is not necessarily connected.
We call
$d_i(g)$ (resp. $d_a(g)$) the {\it induced degree} of 
node $i$ (resp. $a$) in $g$. For a generalized loop we have
$d_i(g)\in\{2,\cdots,l\}$ and 
$d_a(g)\in\{2,\cdots,r\}$. 

\begin{figure}[ptb]
\centering
\includegraphics[
height=1.05in,
width=2.00in
]
{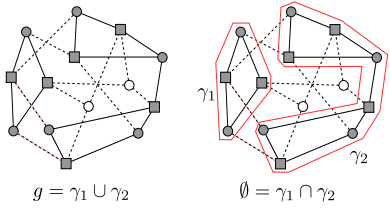}
\caption{Example of $\Gamma\in\mathcal{B}\left(  3,4,8\right)  $. The generalized loop $g$ has two
disjoint connected parts $\gamma_1$ and $\gamma_2$.}
\label{fig:polymers}
\end{figure}

For a finite size system, the {\it loop series} \cite{Chertkov} is an identity
valid for any solution of the BP equations. We have
\begin{equation}\label{cc-identity}
 \frac{1}{n}\ln Z - f_{\mathrm{Bethe}}\left(\underline\eta,\underline{\widehat{\eta}}\right)
= \frac{1}{n} \ln\biggl\{\sum_{g\subset\Gamma}K\left(
g\right)\biggr\}.
\end{equation}
The sum on the right hand side carries over all generalized loops included in $\Gamma$.
The $K(g)$ can be expressed entirely in terms of BP messages $\eta_{i\to a}$ and 
$\widehat\eta_{a\to i}$. The explicit formula is given in the appendix.
Remarkably $K(g)$ factorizes in a product of 
contributions associated to the connected parts of $g$. Each generalized loop can be decomposed in a unique way
as a union $g =\cup_{k}\gamma_k$ where $\gamma_k$ are {\it connected and disjoint generalized loops}. The $\gamma_k$'s are called 
{\it polymers}. We have $K(g)= \prod_k K(\gamma_k)$ and 
\begin{align}
\sum_{g\subset\Gamma}K\left(
g\right) =\sum_{M\geq 0} & \frac{1}{M!} 
\sum_{\gamma_{1},...,\gamma_{M}\subset \Gamma}\prod_{k=1}^{M}K\left(  \gamma_{k}\right)
\nonumber \\ &
\times
\prod_{k<k^{\prime}}\mathbb{I}\left(  \gamma_{k}\cap\gamma_{k^{\prime}
}=\emptyset\right)  . \label{eq: z polymer}
\end{align}
In the sum each $\gamma_{k}$ runs over all polymers contained in $\Gamma$. 
The factor $\frac{1}{M!}$ accounts for the fact that a
polymer configuration has to be counted only once. Finally the indicator function ensures that the polymers do not intersect.
Because of this constraint all sums in \eqref{eq: z polymer} are finite.

From a physical
point of view \eqref{eq: z polymer} is the partition function of polymers that can acquire any shape allowed by $\Gamma$, have {\it activity}\footnote{This is the name used by chemists to 
denote the probability weight assuming that the polymer would be isolated. Note that here $K(\gamma)$ can be negative and this analogy is at best formal.}
$K(\gamma)$, and interact via a two body hard-core repulsion. This analogy allows us to use methods from statistical mechanics to analyze the corrections to the Bethe 
free energy. 

We say that a {\it polymer is small} if $\vert\gamma\vert <\lambda n$ for some fixed $\lambda$ that we take in the interval $[0,\lambda_0]$. The contribution of small polymers to 
\eqref{eq: z polymer} is 
 \begin{align}
Z_{p}\left(  \underline\eta,\underline{\widehat{\eta}}\right)  =\sum_{M\geq0} & \frac{1}{M!}
\sum_{\gamma_{1},...,\gamma_{k}~\mathrm{s.t}~\left\vert \gamma
_{k}\right\vert <\lambda n}
\prod_{k=1}^{M}K\left(  \gamma_{k}\right)
\nonumber \\ &
\prod_{k<k^{\prime}}\mathbb{I}\left(  \gamma_{k}\cap\gamma_{k^\prime}=\emptyset\right).
\end{align}

\begin{theorem}\label{second-theorem}
Suppose $l$ is odd and $3\leq l\leq r$. take $\Gamma$ at random. There exist a small $h_0$ independent of $n$ such that for $\vert h\vert<h_0$, and any high-noise-solution 
$\left(  \underline\eta,\underline{\widehat{\eta}}\right)$ of the BP equations, with probability $1-\frac{1}{\epsilon}O(n^{-(l(1-\kappa)-1)})$, 
\begin{equation}\label{second-result}
\frac{1}{n}\ln Z = f_{\mathrm{Bethe}}\left(  \underline\eta,\underline{\widehat{\eta}}\right) + \frac{1}{n}\ln Z_p + O(e^{-\epsilon n})
\end{equation}
for $\epsilon>0$. Here $O(\cdot)$ is uniform $\underline h$.  
\end{theorem}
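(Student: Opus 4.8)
The plan is to start from the loop identity \eqref{cc-identity} together with the polymer representation \eqref{eq: z polymer}, and to reduce \eqref{second-result} to the single estimate $\frac1n\bigl(\ln\sum_{g\subset\Gamma}K(g)-\ln Z_p\bigr)=O(e^{-\epsilon n})$; indeed, feeding this into \eqref{cc-identity} and adding $f_{\mathrm{Bethe}}$ gives the claim. Writing $Z_{\mathrm{full}}=\sum_{g\subset\Gamma}K(g)$, the difference $Z_{\mathrm{full}}-Z_p$ is exactly the sum over polymer configurations containing at least one \emph{large} connected polymer, i.e.\ one with $|\gamma|\geq\lambda n$. Everything rests on a good bound for the activity $K(\gamma)$ of a connected generalized loop. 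Reading off the appendix formula and inserting the high-noise bounds $|\eta_{i\to a}|\leq|h|+O(|h|^{r-1})$ and $|\widehat\eta_{a\to i}|\leq|h|^{r-1}+O(|h|^{r})$, I expect a factorized estimate of the form $|K(\gamma)|\leq C^{|\gamma|}(\tanh|h|)^{p(\gamma)}$, where, writing $C(\gamma)$ for the check nodes of $\gamma$, the exponent $p(\gamma)=\sum_{a\in C(\gamma)}(r-d_a(\gamma))$ counts the external legs of those checks; each leg attaches to a variable outside $\gamma$ and contributes a factor $\tanh\eta\sim|h|$.

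The first substantive step is to convert the expander hypothesis into a linear lower bound $p(\gamma)\geq c_0|\gamma|$ for every small polymer. Let $\mathcal V=V(\gamma)$ with $|\mathcal V|<\lambda n$, and let $e_a$ be the number of edges joining a check $a$ to $\mathcal V$ in $\Gamma$. Since $\mathcal V$ emits $l|\mathcal V|$ edges onto $|\partial\mathcal V|\geq\kappa l|\mathcal V|$ distinct checks, $\sum_{a\in\partial\mathcal V}(e_a-1)=l|\mathcal V|-|\partial\mathcal V|\leq(1-\kappa)l|\mathcal V|$. The checks of $\gamma$ satisfy $2\leq d_a(\gamma)\leq e_a$, so $|E(\gamma)|-|C(\gamma)|=\sum_{a\in C(\gamma)}(d_a(\gamma)-1)\leq(1-\kappa)l|\mathcal V|$, and together with $|E(\gamma)|=\sum_i d_i(\gamma)\geq2|\mathcal V|$ this gives
\begin{equation}
p(\gamma)=r|C(\gamma)|-|E(\gamma)|\geq\bigl[2(r-1)-rl(1-\kappa)\bigr]\,|\mathcal V|.
\end{equation}
The bracket is strictly positive precisely when $\kappa>1-\tfrac{2(r-1)}{lr}$, which is exactly the lower endpoint imposed on $\kappa$ in Section \ref{sec2}. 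Hence on a $(\lambda,\kappa)$ expander every polymer with $|\gamma|<\lambda n$ obeys $|K(\gamma)|\leq(C'|h|^{\delta})^{|\gamma|}$ for some $\delta>0$, with $C'|h|^{\delta}<1$ once $|h|\leq h_0$. In particular a completely ``full'' loop, with every $d_a=r$, is impossible below the scale $\lambda n$.

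With this bound in hand, convergence of the small-polymer part follows from a standard polymer-expansion criterion (see \cite{Brydges}): the number of connected generalized loops of size $s$ through a fixed node is at most $(e\,lr)^{s}$ because $\Gamma$ has bounded degree, so $\sum_{\gamma\ni v}|K(\gamma)|\,e^{|\gamma|}$ is summable for $|h|$ small, uniformly in $n$. This yields an absolutely convergent cluster expansion for $\frac1n\ln Z_p$ and, by the same token, control of the ratios $Z_p^{(\gamma)}/Z_p$ that arise when a large polymer is removed. Factoring one large polymer out of $Z_{\mathrm{full}}-Z_p$ then reduces the whole estimate to showing $\sum_{\gamma\ \mathrm{large}}|K(\gamma)|\leq e^{-\epsilon n}$, with $\epsilon$ proportional to $\lambda$.

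The hard part is precisely this last sum. For $|\gamma|\geq\lambda n$ the counting above breaks down — the expander property says nothing about sets of size $\geq\lambda n$ — and a large ``full'' loop would have $p(\gamma)=0$ and activity $\Theta(1)$, which the per-polymer bound cannot suppress. I would control these by a first-moment estimate over the ensemble: $\mathbb{E}_\Gamma\bigl[\sum_{\gamma\ \mathrm{large}}|K(\gamma)|\bigr]$ is dominated by the expected number of dense, low-external-leg loops, whose exponential edge-cost beats their entropy when $l<r$, so that the offending event is absorbed into the same $\frac1\epsilon O(n^{-(l(1-\kappa)-1)})$ as the expander failure, the factor $\tfrac1\epsilon\sim\tfrac1\lambda$ reflecting the $\lambda$-dependence of the expander probability. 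On the complementary event every surviving large polymer again obeys $|K(\gamma)|\leq(C'|h|^{\delta})^{|\gamma|}$, whence $\sum_{\gamma\ \mathrm{large}}|K(\gamma)|\leq\sum_{s\geq\lambda n}(e\,lr)^{s}(C'|h|^{\delta})^{s}=O(e^{-\epsilon n})$. Collecting the pieces, and noting that every bound used $\underline h$ only through $|h_i|=|h|$, gives \eqref{second-result} with the stated probability and uniformity.
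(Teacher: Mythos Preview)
Your overall architecture---split the loop sum into small and large polymers, control the small ones by a polymer expansion on expanders, and kill the large ones by a first-moment argument over the graph ensemble---is exactly the paper's strategy. Your expander computation for small polymers is clean and correct; the inequality $p(\gamma)\geq[2(r-1)-rl(1-\kappa)]\,|\mathcal V|$ is a valid route to the bound $|K(\gamma)|\le (C'|h|^{\delta})^{|\gamma|}$ and recovers the same threshold $\kappa>1-\tfrac{2(r-1)}{lr}$ as the paper.

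The large-polymer paragraph, however, has a real gap. Your activity estimate keeps only the check-node factor $(\tanh|h|)^{p(\gamma)}$ and drops the variable-node contribution. For the first-moment bound this is fatal: the dangerous ``dense, low-external-leg'' configurations you single out have $p(\gamma)\approx 0$, so your bound gives $|K(\gamma)|=\Theta(1)$ there, and no ``edge-cost'' is available to beat the entropy (the whole graph $\Gamma$ is itself such a $g$, present with probability one). What actually saves the day in the paper is the variable-node factor in \eqref{eq: bound activity}: a node of \emph{odd} induced degree contributes $O(h)$, so when $l$ is odd the fully saturated configurations still carry a factor $h^{n}$. This is precisely where the hypothesis ``$l$ odd'' enters, and your argument never uses it. Relatedly, the sentence ``on the complementary event every surviving large polymer again obeys $|K(\gamma)|\le(C'|h|^{\delta})^{|\gamma|}$'' is false: a Markov bound on $\sum_{\gamma\ \mathrm{large}}|K(\gamma)|$ controls the sum, not individual terms, and a full loop always has $|K|=\Theta(1)$ regardless of which event you condition on. Finally, the factor $\tfrac{1}{\epsilon}$ in the probability does not come from the expander failure; in the paper it arises from applying Markov to Corollary~\ref{lemmepol} to get $\mathbb{P}[\,Z_p^{-1}>e^{n\epsilon}\,]\le \tfrac{1}{\epsilon}O(n^{-(l(1-\kappa)-1)})$, a step you also need and have not supplied, since the conclusion concerns $R/Z_p$, not $R$ alone.
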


\begin{figure}
\centering
\includegraphics[
height=1.15in,
width=3.00in]
{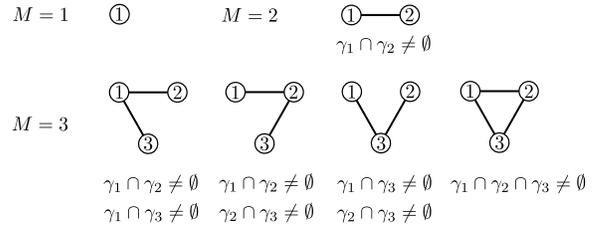}\caption{All the Mayer graphs for $M=1,2,3$.}
\label{fig mayer graph}
\end{figure}
 
The second term on the right hand side of \eqref{second-result} is the partition function of small polymers. One can compute in a systematic way
the leading corrections to the Bethe free energy by expanding the logarithm in powers of the activities $K(\gamma)$.
This yields the so-called {\it polymer (or Mayer) expansion},
\begin{align}
\frac{1}{n}\ln Z_{p}\left(  \vec{\eta}\right)   &  =\frac{1}{n}\sum_{M\geq 1}^{+\infty}\frac{1}{M!}
\sum_{\gamma_{1},...,\gamma_{M}~{\rm s.t}~\vert \gamma
_{k}\vert <\lambda n}\prod_{k=1}^{M}K\left(  \gamma_{k}\right)
\nonumber\\
&  \times\sum_{G\subset\mathcal{G}_{M}}\prod_{(k,k^{\prime})\in G}
(-\mathbb{I}\left(  \gamma_{k}\cap\gamma_{k^{\prime}}\neq\emptyset\right)  ).
\label{eq:polymer_expansion}
\end{align}
The third sum is over the set $\mathcal{G}_{M}$ of all connected {\it Mayer graphs} $G$ with M
vertices labeled by $\gamma_{1},...,\gamma_{M}$ (see figure \ref{fig mayer graph}). Note that in the expansion of the logarithm, the indicator
function forces the polymers to overlap. Therefore the summations contains an infinite number of terms and its convergence has to be controlled. 

\begin{lemma}\label{convergence-lemma}
Suppose $r>2$.
Fix $\zeta_0>1$ and replace $K(\gamma)$ by $\zeta K(\gamma)$ ($\zeta\in \mathbb{C}$) in the polymer expansion \eqref{eq:polymer_expansion} which then becomes 
a power series in the parameter $\vert \zeta\vert\leq \zeta_0$. Assume that $\Gamma$ is a $(\lambda, \kappa)$ expander with $\kappa\in]1-\frac{2(r-1)}{lr}, 1-\frac{1}{l}[$.
One can find $h_0>0$ such that for $\vert h\vert<h_0$ this power series is absolutely convergent uniformly in $n$ and $\underline h$.
\end{lemma}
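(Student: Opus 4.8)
The plan is to prove convergence by verifying a Koteck\'y--Preiss type criterion for the hard-core polymer gas \eqref{eq:polymer_expansion}. Concretely, I would seek a weight of the form $a(\gamma)=b\,\vert\gamma\vert$, with $\vert\gamma\vert$ the number of vertices of $\gamma$ and $b>0$ a constant to be fixed, such that
\begin{equation}
\sum_{\gamma'\,:\,\gamma'\cap\gamma\neq\emptyset}\vert\zeta K(\gamma')\vert\,e^{a(\gamma')}\leq a(\gamma)\qquad\text{for every polymer }\gamma.\nonumber
\end{equation}
Once this holds uniformly in $n$, in $\underline h$ and in $\vert\zeta\vert\leq\zeta_0$, the standard theory yields absolute convergence of \eqref{eq:polymer_expansion} with the same uniformity, which is the assertion. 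Bounding the left-hand side by $\sum_{x\in\gamma}\sum_{\gamma'\ni x}\vert\zeta K(\gamma')\vert e^{a(\gamma')}$, it suffices to show $\sum_{\gamma'\ni x}\vert K(\gamma')\vert e^{b\vert\gamma'\vert}\leq b/\zeta_0$ uniformly in the fixed vertex $x$.

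The first ingredient is an activity bound. From the appendix formula for $K(\gamma)$ together with the high-noise estimates $\vert\eta_{i\to a}\vert=O(\vert h\vert)$ and $\vert\widehat\eta_{a\to i}\vert=O(\vert h\vert^{r-1})$, I would check that every variable-node factor is bounded by an $(l,r)$-dependent constant (and is $O(\vert h\vert)$ once its induced degree is $\geq 3$), whereas a check-node factor of induced degree $d_a$ is $O(\vert h\vert^{\,r-d_a})$, the smallness arising from its $r-d_a$ external legs, each carrying a factor $\tanh\eta\sim\vert h\vert$. Writing $v,c,e$ for the numbers of variable nodes, check nodes and edges of $\gamma$, this yields
\begin{equation}
\vert K(\gamma)\vert\leq C^{\,e}\,\vert h\vert^{\,S(\gamma)},\qquad S(\gamma)=\sum_{a}\bigl(r-d_a(\gamma)\bigr)=rc-e.\nonumber
\end{equation}
The decisive step is to turn $S(\gamma)$ into genuine per-size smallness, using that $\gamma$ is small so that its variable set obeys $\vert\partial\mathcal{V}_\gamma\vert\geq\kappa lv$. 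Of the $lv$ half-edges attached to $\mathcal{V}_\gamma$, exactly $e$ belong to the polymer, while the $\vert\partial\mathcal{V}_\gamma\vert-c\geq\kappa lv-c$ external check nodes each receive at least one of the remaining ones; hence $e\leq(1-\kappa)lv+c$ and so $S(\gamma)\geq(r-1)c-(1-\kappa)lv$. Combined with $c\geq 2v/r$ (from $2v\leq e\leq rc$) this produces
\begin{equation}
S(\gamma)\geq\Bigl(\tfrac{2(r-1)}{r}-(1-\kappa)l\Bigr)v=:\delta_0\,v,\nonumber
\end{equation}
with $\delta_0>0$ exactly when $\kappa>1-\frac{2(r-1)}{lr}$. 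Since $v$ and $e$ are comparable to $\vert\gamma\vert$ up to $(l,r)$-dependent constants and $\vert h\vert<1$, this upgrades the bound to $\vert K(\gamma)\vert\leq(C'\vert h\vert^{\delta_1})^{\vert\gamma\vert}$ for some $\delta_1>0$.

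It remains to combine this with a counting bound. A standard bounded-degree estimate gives that the number of connected generalized loops of vertex-size $s$ through a fixed vertex is at most $D^{s}$ for some $D=D(l,r)$. Therefore
\begin{equation}
\sum_{\gamma'\ni x}\vert K(\gamma')\vert\,e^{b\vert\gamma'\vert}\leq\sum_{s\geq 1}\bigl(D\,C'e^{b}\,\vert h\vert^{\delta_1}\bigr)^{s},\nonumber
\end{equation}
a geometric series which, for a fixed choice of $b$ and all $\vert h\vert<h_0$ with $h_0$ small enough, is finite and at most $b/\zeta_0$, uniformly in $n$, in $x$ and in $\underline h$ (the bounds depend on $\underline h$ only through $\vert h_i\vert=\vert h\vert$) and for all $\vert\zeta\vert\leq\zeta_0$. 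This verifies the criterion and proves the lemma. Note that the restriction to small polymers is precisely what makes the expander inequality $\vert\partial\mathcal{V}_\gamma\vert\geq\kappa lv$ available, and that $r>2$ is what renders the admissible interval $]1-\frac{2(r-1)}{lr},1-\frac1l[$ for $\kappa$ nonempty, so that $\delta_0>0$ can be secured.

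I expect the crux to be the geometric inequality $rc-e\geq\delta_0 v$: extracting a strictly positive, $n$-independent decay rate $\delta_0$ is the whole point, and its threshold $\kappa=1-\frac{2(r-1)}{lr}$ is saturated by the dense $(2,r)$-biregular subgraphs (all check nodes full, all variable nodes of induced degree two), which carry $O(1)$ activity and would by themselves spoil convergence. Balancing this expander-generated smallness against the exponential entropy $D^{s}$ of polymer shapes, uniformly in $n$, is the heart of the argument; once the appendix formula for $K(\gamma)$ is estimated as above, the remaining steps are routine applications of the Koteck\'y--Preiss machinery.
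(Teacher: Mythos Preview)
Your proof is correct and follows essentially the same strategy as the paper: reduce convergence to a standard cluster-expansion criterion, establish exponential smallness $|K(\gamma)|\leq(\text{const}\cdot|h|^{\delta})^{|\gamma|}$ for small polymers via the expander property, and beat the entropy of polymer shapes by taking $h$ small. The paper invokes the Brydges criterion $Q<1$ and obtains the key inequality $\sum_{t<r}(r-t)m_t(\gamma)\geq c\,|\gamma|$ with $c=r-\frac{2+r}{3-l(1-\kappa)}$ through an optimization under the constraints \eqref{case}, \eqref{cons}; you instead use the Koteck\'y--Preiss form and derive the equivalent per-size bound more directly, by reading off $e\leq(1-\kappa)lv+c$ from the edges leaving $\mathcal V_\gamma$ and combining with $c\geq 2v/r$. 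Both routes hit the same threshold $\kappa>1-\tfrac{2(r-1)}{lr}$, and your argument is arguably cleaner because it avoids the unspecified optimization step.

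One small inaccuracy to fix: your parenthetical remark that variable-node factors are $O(|h|)$ once the induced degree is $\geq 3$ is not what the appendix bound \eqref{eq: bound activity} says---it is the \emph{parity} of $d_i(\gamma)$ that decides whether the factor is $O(|h|)$ or $1+O(h^2)$. This does not affect your argument, since you only use that every variable-node factor is bounded by an $(l,r)$-dependent constant, which is correct; just remove or correct the parenthetical. Also, for tidiness, your constant prefactor is naturally $C^{|\gamma|}$ (one bounded factor per vertex) rather than $C^{e}$, though since $|\gamma|\leq e$ the latter is also valid.
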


\begin{remark}
This lemma holds for any $(l,r)$ with $r>2$.
\end{remark}

\begin{remark}
Our real interest is of course for $\zeta=1$, and the introduction of the parameter $\zeta$ above is just a convenient way to describe the nature of the polymer expansion.
The lemma implies that one can compute the limit $n\to+\infty$ of the polymer expansion {\it term by term} (for small polymers), and that this limit is analytic for 
$\vert \zeta\vert <\zeta_0$.
This lemma forms a crucial part for the proofs of theorems \ref{theorem1} and \ref{second-theorem}.
\end{remark}

\begin{remark}
The last term in the right hand side of \eqref{second-result} contains 
the contributions of {\it large} polymers of size greater than $\lambda n$ (in a sea of small polymers). It turns out that this contribution cannot be 
expanded into an absolutely convergent series,
and has to be treated non-perturbatively by counting methods.
\end{remark}

Lemma \ref{convergence-lemma} has the following consequence:

\begin{corollary}\label{lemmepol}
Suppose $r>2$. One can find $h_0>0$ independent of $n$ such that for $\vert h\vert < h_0$, 
\begin{equation}\label{coro}
\frac{1}{n}\mathbb{E}_\Gamma[\ln Z_p(\underline \eta, \widehat{\underline{\eta}})] = O(\frac{1}{n^{l(1-\kappa)-1}})
\end{equation}
\end{corollary}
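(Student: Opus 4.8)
The plan is to estimate $\mathbb{E}_\Gamma[\frac{1}{n}\ln Z_p]$ by splitting the ensemble according to whether $\Gamma$ is a good expander, since the two regimes are controlled by entirely different mechanisms. Let $A$ be the event that $\Gamma$ is a $(\lambda,\kappa)$ expander, so that by Section~\ref{sec2} one has $\mathbb{P}(A^c)=O(n^{-(l(1-\kappa)-1)})$. I would write
\begin{equation}
\frac{1}{n}\mathbb{E}_\Gamma[\ln Z_p] = \frac{1}{n}\mathbb{E}_\Gamma[\ln Z_p\,\mathbb{I}(A)] + \frac{1}{n}\mathbb{E}_\Gamma[\ln Z_p\,\mathbb{I}(A^c)]
\end{equation}
and bound the two terms independently.

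For the expander term I would invoke Lemma~\ref{convergence-lemma} at $\zeta=1$ (allowed since $\zeta_0>1$): on $A$ the Mayer expansion \eqref{eq:polymer_expansion} converges absolutely, uniformly in $n$ and $\underline h$, so $\frac{1}{n}\ln Z_p$ equals $\frac{1}{n}$ times an absolutely convergent sum of connected-cluster weights $\Phi^T(X)$. Using $\mathbb{I}(A)\le 1$ and exchanging $\mathbb{E}_\Gamma$ with the cluster sum, each term obeys $|\mathbb{E}_\Gamma[\Phi^T(X)\mathbb{I}(A)]|\le \sup|\Phi^T(X)|\cdot\mathbb{P}(S\subset\Gamma)$, where $S$ is the support of $X$. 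Now the polymers of a nonzero cluster pairwise overlap, so $S$ is a single connected generalized loop; since every induced degree is $\ge 2$ we have $e\ge v+c$ for its numbers of edges, variable nodes and check nodes, whence $\mathbb{P}(S\subset\Gamma)=O(n^{v+c-e})=O(1)$. Equivalently, the expected number of short generalized loops of each fixed size stays bounded as $n\to\infty$. Summing the absolutely convergent weights against these $O(1)$ occupation probabilities gives a total of order $1$, and the prefactor $\frac{1}{n}$ yields $\frac{1}{n}\mathbb{E}_\Gamma[\ln Z_p\,\mathbb{I}(A)]=O(1/n)$.

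For the non-expander term I need a deterministic bound $\frac{1}{n}|\ln Z_p|\le C$ valid for every $\Gamma\in\mathcal{B}(l,r,n)$, since Lemma~\ref{convergence-lemma} is unavailable off $A$. I would get this from the smallness of the activities alone: the message bounds of Section~\ref{sec2} together with the explicit form of $K(\gamma)$ (appendix) give $|K(\gamma)|\le \rho^{|\gamma|}$ with $\rho\to 0$ as $h\to 0$, while the number of connected generalized loops of size $s$ through a fixed vertex is at most $C_0^s$ in any bounded-degree graph. Hence $\sup_x\sum_{\gamma\ni x}|K(\gamma)|\,\mathrm{e}^{|\gamma|}$ is arbitrarily small for $|h|<h_0$, uniformly over the whole ensemble; this Koteck\'y--Preiss type criterion guarantees both $Z_p>0$ and $\frac{1}{n}|\ln Z_p|\le \frac{1}{n}\sum_\gamma |K(\gamma)|\le C$ for every $\Gamma$. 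With $\mathbb{P}(A^c)=O(n^{-(l(1-\kappa)-1)})$ this gives $\frac{1}{n}\mathbb{E}_\Gamma[\ln Z_p\,\mathbb{I}(A^c)]=O(n^{-(l(1-\kappa)-1)})$. Adding the two contributions and using $0<l(1-\kappa)-1<1$, which follows from $\kappa\in\,]1-\frac{2(r-1)}{lr},1-\frac{1}{l}[$, the $O(1/n)$ term is absorbed and the claimed rate \eqref{coro} follows.

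The step I expect to be the main obstacle is the deterministic, ensemble-uniform bound on $\frac{1}{n}|\ln Z_p|$ on the complement of $A$: there one has neither Lemma~\ref{convergence-lemma} nor any a priori control keeping $Z_p$ away from $0$, and since the $K(\gamma)$ can be negative a naive triangle inequality bounds $Z_p$ only from above. Making the local Koteck\'y--Preiss criterion deliver a lower bound on $Z_p$ uniformly over all of $\mathcal{B}(l,r,n)$ — in particular verifying that the per-vertex activity sum is controlled by a loop-counting estimate that does not invoke expansion — is the delicate point; by comparison the expected-loop-count argument on $A$ is routine.
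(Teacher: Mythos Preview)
Your split into the expander event $A$ and its complement mirrors the paper's strategy. On $A$ your route differs from the paper's: you want to exchange $\mathbb{E}_\Gamma$ with the cluster sum and use that the expected number of embedded copies of any fixed connected generalized loop is $O(1)$; the paper instead bounds $n^{-1}|\ln Z_p|$ on $A$ by the rooted-polymer sum \eqref{eq:polymer_inequality}, then conditions on the $O(\ln n)$-neighborhood of the root being a tree (probability $1-O(n^{-(1-\beta)})$), so that the shortest polymer through the root has size $\Omega(\ln n)$ and \eqref{bound-activity} forces the rooted sum to be $O(n^{-\beta|\ln h|})$. Your subgraph-counting variant is plausible but would need care in exchanging the expectation with a cluster sum whose index set, activities, and convergence all depend on $\Gamma$.

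The genuine gap is in your $A^c$ argument. You assert that the appendix formulas yield $|K(\gamma)|\le\rho^{|\gamma|}$ with $\rho\to 0$ as $h\to 0$, and then run a Koteck\'y--Preiss argument valid on every $\Gamma\in\mathcal{B}(l,r,n)$. This is exactly what \emph{fails}, and the paper says so explicitly in the proof of Lemma~\ref{convergence-lemma}. From \eqref{eq: bound activity}, a check node of maximal induced degree $r$ contributes a factor $(1-\alpha_r r h^2)$ close to $1$, and a variable node of \emph{even} induced degree $s$ contributes $(1+\tfrac{\beta_s}{2}s(s-1)h^2)>1$. A polymer made entirely of such nodes has $\overline K(\gamma)$ of order $1$, not exponentially small in $|\gamma|$; as $h\to 0$ the best uniform per-node rate tends to $1$, not to $0$. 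The whole purpose of the expander hypothesis is to forbid these bad degree profiles among small polymers and thereby recover \eqref{bound-activity}. Without expansion, the per-vertex activity sum $\sum_{\gamma\ni x}|K(\gamma)|\,e^{|\gamma|}$ has no reason to be small, so your Koteck\'y--Preiss criterion is unavailable on $A^c$: you obtain neither a lower bound keeping $Z_p$ away from $0$ nor the deterministic $O(1)$ bound on $n^{-1}|\ln Z_p|$ that your decomposition needs. The paper does not attempt an activity-based convergence argument on $A^c$ either; it simply charges that event to its probability $O(n^{-(l(1-\kappa)-1)})$ and relies on a separate boundedness of $n^{-1}|\ln Z_p|$.
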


\section{Convergence of the Polymer Expansion 
\eqref{eq:polymer_expansion}}

We give the main ideas of the proof of lemma \ref{convergence-lemma}.

\begin{proof}[Proof of Lemma \ref{convergence-lemma}]
A standard criterion for uniform convergence and analyticity of the polymer expansion is \cite{Brydges}
\begin{equation}
Q \equiv \sum_{t=0}^{\infty}\frac{1}{t!}\sup_{z\in
V\cup C}\sum_{\gamma\ni z, {\rm \left\vert \gamma\right\vert <\lambda
n}}\left\vert \gamma\right\vert ^{t}\zeta_0\left\vert K\left(  \gamma\right)
\right\vert < 1\,.
\end{equation}
If we prove that for polymers such that $\vert\gamma\vert <\lambda n$ we have 
\begin{equation}\label{bound-activity}
\vert K(\gamma)\vert\leq h^{\frac{c}{2}\vert\gamma\vert},
\end{equation}
then the result follows for $h$ small enough. 

The main difficulty in proving \eqref{bound-activity} is that the (optimal) estimate \eqref{actibound}, \eqref{eq: bound activity} in the Appendix shows that $K(\gamma)$ is 
not necessarily very small for graphs containing too many check nodes of maximal induced degree and too many variable 
nodes of even induced degree. More precisely for these bad graphs the activity is not exponentially small in the size of the graph. Then 
it is not possible to compensate for the ''entropy`` of the graph.

We will use an expander argument to show that these bad cases {\it do not occur} when
$\vert\gamma\vert <\lambda n$.
We derive \eqref{bound-activity}
with 
\begin{equation}\label{c}
c = r-
\frac
{2+r}{3-l(  1-\kappa)}\,.
\end{equation}
In the process of this derivation one has to require $3-l(1-\kappa) >0$
and $c>0$. This imposes the condition on the expansion constant
$\kappa >1-\frac{2(r-1)}{lr}$. Note that an expansion constant cannot be greater than 
$1-1/l$, so it is fortunate that we have  
$1-\frac{1}{l} > 1-\frac{2(r-1)}{lr}$ (for any $r>2$). 

Now we sketch the proof of \eqref{bound-activity} and \eqref{c}. Recall that 
$d_i(\gamma)$ (resp. $d_a(\gamma)$) is the induced degree of 
node $i$ (resp. $a$) in $\gamma$.  
The {\it type} of $\gamma$
is given by two vectors $\underline n=(n_s(\gamma))_{s=2}^l$ and $\underline m = (m_t(\gamma))_{t=2}^r$ defined as 
$n_{s}\left(  \gamma\right)  :=\left\vert \left\{  i\in \gamma\cap V \vert d_i(\gamma) =s\right\}  \right\vert $ and $m_{t}\left(  \gamma\right)
:=\left\vert \left\{  a\in \gamma\cap C \vert d_a(\gamma)
=t\right\}\right\vert$. In words, $n_s(\gamma)$ and $m_t(\gamma)$ count the number of 
variable and check nodes with induced degrees $s$ and $t$ in $\gamma$. 
Note that we have the constraints
\begin{equation}\label{case}
\begin{cases}
\vert \gamma\vert = \sum_{s=2}^l n_s(\gamma) + \sum_{t=2}^r m_t(\gamma)
\\ 
\sum_{s=2}^{l}sn_{s}(\gamma)=\sum_{t=2}^{r}tm_{t}(\gamma)
\end{cases}
\end{equation}
We apply the expander property to the set $\mathcal{V=}\left\{  i\in \gamma\cap V\right\}$. This reads 
\begin{equation}\label{exp}
\left\vert \partial\mathcal{V}\right\vert \geq\kappa l\left\vert \sum_{s=2}^{l}n_{s}\left(  \gamma\right)  \right\vert\,.
\end{equation}
On the other hand 
$
\left\vert \partial\mathcal{V}\right\vert\leq \sum_{t=2}^{r-1}m_{t}\left(  \gamma\right)  +\sum_{s=2}^{l}(l-s)n_{s}\left(
\gamma\right)
$.
With \eqref{exp} this yields the constraint 
\begin{equation}\label{cons}
\sum_{t=2}^{r-1}m_{t}\left(  \gamma\right)  +\sum_{s=2}^{l}(l-s)n_{s}\left(
\gamma\right) \geq\kappa
l\left\vert \sum_{s=2}^{l}n_{s}\left(  \gamma\right)  \right\vert \,.
\end{equation}
Using all constraints \eqref{case} and \eqref{cons} we can prove 
\begin{equation}
\sum_{t=2}^{r-1}\left(  r-t\right)  m_{t}\left(  \gamma\right)  \geq\left(
r-\frac{2+r}{3-l\left(  1-\kappa\right)  }\right)  \left\vert \gamma
\right\vert .
\end{equation}

Finally, keeping only the product over $t=2,\cdots, r-1$ in 
estimates \eqref{actibound} and \eqref{eq: bound activity} in the Appendix, we obtain \eqref{bound-activity}.
\end{proof}

\begin{proof}[Proof of Corollary \ref{lemmepol}]
Conditional on $\Gamma$ being an expander we have from the previous proof $0<Q<<1$. Then, polymer expansion techniques \cite{Brydges} allow to estimate the sum 
over $M$ in (\ref{eq:polymer_expansion}) term by term, which yields
\begin{equation}
\bigl\vert \frac{1}{n}\ln Z_{p}(\underline{\eta}, \widehat{\underline{\eta}})
\bigr\vert  \leq
(1-Q)^{-1} n^{-1}\sum_{z\in V\cup C}\sum_{{\gamma\ni z, {\rm \left\vert \gamma\right\vert <\lambda
n}}}\left\vert K\left(  \gamma\right)
\right\vert e^{\left\vert \gamma\right\vert } . 
\label{eq:polymer_inequality}
\end{equation}
If we take the expectation over graphs we cancel the sum over $z\in V\cup C$ and the $n^{-1}$. This allows to consider a sum 
of polymers rooted at one vertex. We compute this expectation by conditioning on the {\it first} event that $\Gamma$ is tree-like in a neighborhood of 
size $O(\ln n)$ around this vertex, and on the {\it second} complementary event. The second event has small probability
$O(n^{-(1-\beta)})$ for any $0<\beta<1$. Besides from \eqref{eq:polymer_inequality} and \eqref{bound-activity} 
it is easy to show that $n^{-1}\vert\ln Z_p\vert$ is bounded. For the first event we have that the smallest polymer is a cycle with $\vert\gamma\vert=O(\ln n)$.
This with \eqref{eq:polymer_inequality} and \eqref{bound-activity} implies that $n^{-1}\vert\ln Z_{p}(\underline{\eta}, \widehat{\underline{\eta}})\vert
\leq n^{-\beta\vert\ln \vert h\vert\vert}$. Combining all these remarks with the fact that $\Gamma$ is an expander with probability $1- O(n^{-(l(1-\kappa)-1)})$ we obtain \eqref{coro}.
\end{proof}

\section{Probability estimates on graphs\label{sec:probability on graphs}}

In this section we deal with the contribution $R(\underline\eta, \widehat{\underline\eta} )$ corresponding to terms containing at least one large polymer in 
\eqref{eq: z polymer}. We have
 \begin{equation}
  \sum_{g\subset \Gamma} K(g) = Z_p(\underline\eta, \widehat{\underline\eta}) +  R(\underline\eta, \widehat{\underline\eta}),
 \end{equation}
where 
\begin{equation}
R(\underline\eta, \widehat{\underline\eta}) 
= \sum_{g\subset\Gamma {~\rm s.t~} \exists \gamma\subset g {~\rm with~} \vert\gamma\vert\geq\lambda n} K(g),
\end{equation}

The next lemma shows that the contribution from large polymers is exponentially small, with high probability with respect 
to the graph ensemble.

\begin{lemma}
Fix $\delta>0$. Assume $l\geq 3$ odd and $l<r$. There exists a constant $C>0$ depending only on $l$ and $r$ such that for $h$ small enough%
\begin{equation}
\mathbb{P}\left[  \vert R(\underline\eta, \widehat{\underline\eta})\vert  \geq\delta\right]
\leq\frac{1}{\delta}e^{-C n} \label{eq:proba_large_loop}%
\end{equation}
\end{lemma}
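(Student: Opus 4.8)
The plan is to obtain \eqref{eq:proba_large_loop} from a first moment estimate over the graph ensemble. By Markov's inequality, $\mathbb{P}[|R(\underline\eta,\widehat{\underline\eta})|\ge\delta]\le \delta^{-1}\,\mathbb{E}_\Gamma[|R(\underline\eta,\widehat{\underline\eta})|]$, so it suffices to prove $\mathbb{E}_\Gamma[|R(\underline\eta,\widehat{\underline\eta})|]\le e^{-Cn}$. The first step is to bound $|R|$ by absolute activities, $|R|\le\sum_{g}|K(g)|$, the sum running over generalized loops $g$ having at least one connected component (polymer) of size $\ge\lambda n$. Using the factorization $K(g)=\prod_kK(\gamma_k)$ I isolate one large polymer $\gamma_0$ and treat the remaining polymers, all disjoint from $\gamma_0$, as a hard-core sea. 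The crucial structural input is that the activity estimates \eqref{actibound}, \eqref{eq: bound activity} of the Appendix bound $|K(\gamma)|$ \emph{uniformly in $\Gamma$ and in $\underline h$} by a quantity $B(\underline n(\gamma),\underline m(\gamma))$ depending only on the type of $\gamma$; this holds because in the high-noise regime the BP messages obey $|\eta_{i\to a}|\le C|h|$ and $|\widehat\eta_{a\to i}|\le C|h|^{r-1}$. Hence the combinatorics decouples from the specific messages, and the expectation acts only on the indicator that a given loop shape is present in $\Gamma$.

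The sea of polymers disjoint from $\gamma_0$ is controlled as in the perturbative regime: after conditioning on the high-probability event that $\Gamma$ is a $(\lambda,\kappa)$ expander, its small-polymer part is a convergent hard-core gas with partition function at most $e^{o(n)}$ by Lemma \ref{convergence-lemma} (the averaged counterpart of Corollary \ref{lemmepol}), while any large polymer occurring in the sea only reinforces the decay and is absorbed into the outer sum. Thus, up to this subexponential factor, everything reduces to showing
\[
\mathbb{E}_\Gamma\Big[\sum_{\gamma\,:\,|\gamma|\ge\lambda n}|K(\gamma)|\Big]\le e^{-Cn}.
\]
Interchanging expectation and sum and inserting the uniform activity bound, this becomes a union bound over types, $\sum_{\tau:\,|\tau|\ge\lambda n}B(\tau)\,\mathbb{E}_\Gamma[N_\tau]$, where $N_\tau$ is the number of connected generalized loops of type $\tau=(\underline n,\underline m)$ realized in $\Gamma$. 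For the expected count I would use the standard first moment for subgraphs of a random $(l,r)$-regular bipartite graph: writing $V_\gamma=\sum_sn_s$, $C_\gamma=\sum_tm_t$ and $e=\sum_ss\,n_s=\sum_tt\,m_t$ for the numbers of variable nodes, check nodes and edges of $\gamma$, the number of ways to place and wire the nodes is $\sim n^{V_\gamma+C_\gamma}(\mathrm{const})^{|\gamma|}$, while the probability that the prescribed edges are all present decays like $n^{-e}$, so $\mathbb{E}_\Gamma[N_\tau]\le(\mathrm{const})^{|\gamma|}\,n^{-(e-V_\gamma-C_\gamma)}$. Since every induced degree is at least $2$, the constraints \eqref{case} give $e\ge V_\gamma+C_\gamma$ (indeed $2(e-V_\gamma-C_\gamma)=(e-2V_\gamma)+(e-2C_\gamma)\ge0$), so the cyclomatic excess $e-V_\gamma-C_\gamma$ is nonnegative and grows with the number of high-degree nodes.

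The main obstacle is that the activity bound \eqref{bound-activity} used for small polymers is unavailable here, since the expander property of Lemma \ref{convergence-lemma} only constrains subsets of size $<\lambda n$; for large $\gamma$ the factor $B(\tau)$ need not be exponentially small (this is the bad behaviour described after \eqref{c}, with too many maximal-degree check nodes and even-degree variable nodes). I would resolve this by a quantitative dichotomy keyed to the check-degree deficit $D(\gamma)=\sum_{t=2}^{r-1}(r-t)m_t(\gamma)$. If $D(\gamma)\ge\epsilon_0|\gamma|$ (good types), then the computation underlying Lemma \ref{convergence-lemma} — which extracts smallness precisely from non-maximal check nodes — gives $B(\tau)\le h^{c'|\gamma|}$ for some $c'>0$, and for $|\gamma|\ge\lambda n$ this beats the entropy $(\mathrm{const})^{|\gamma|}$ once $h$ is small, yielding a factor $e^{-Cn}$. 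If $D(\gamma)<\epsilon_0|\gamma|$ (bad types), almost all check edges issue from maximal-degree checks, so $e=rC_\gamma-D(\gamma)$ together with $V_\gamma\le e/2$ forces $C_\gamma\ge\frac{2}{r+2}|\gamma|$ and hence a cyclomatic excess $e-V_\gamma-C_\gamma\ge c''|\gamma|$ with $c''>0$ for $\epsilon_0$ small; the realization probability $n^{-(e-V_\gamma-C_\gamma)}$ is then super-exponentially small for $|\gamma|\ge\lambda n$, overwhelming any entropy and any non-smallness of $B(\tau)$.

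Making this dichotomy uniform over all types of size at least $\lambda n$ — proving that a failure of exponential smallness of $B(\tau)$ always comes with a compensating (super-)exponentially small realization probability — is the genuinely non-perturbative, counting analogue for large polymers of the activity-versus-entropy balance performed by the expander argument for small polymers, and I expect it to be the heart of the matter. Summing the resulting $e^{-Cn}$ per-type bounds over the at most $e^{o(n)}$ types of each size and over sizes $\lambda n\le|\gamma|\le|V|+|C|$ preserves an overall bound $e^{-Cn}$; combined with the control of the sea and with Markov's inequality this gives \eqref{eq:proba_large_loop}, with a constant $C>0$ depending only on $l$ and $r$.
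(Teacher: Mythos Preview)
Your overall scaffolding---Markov's inequality followed by a first-moment bound on $\sum |K(g)|$, with an activity-versus-entropy dichotomy---is exactly the route the paper takes. Two points, however, separate your argument from a complete one.

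First, the detour through ``large polymer plus sea'' is unnecessary and creates a gap. The paper simply uses $|R|\le \sum_{g:\,|g|\ge\lambda n}|K(g)|$ (any $g$ containing a large component has $|g|\ge\lambda n$) and bounds the expectation of this sum directly via McKay's subgraph-counting estimates. Your sea control relies on conditioning on the expander event, but that event fails with probability only $O(n^{-(l(1-\kappa)-1)})$, which is far too large to be absorbed into an $e^{-Cn}$ bound unless you separately control $|R|$ on the complement; you do not. The paper's direct approach never invokes expansion in this lemma.

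Second, and more serious, your first-moment heuristic $\mathbb{E}_\Gamma[N_\tau]\le(\mathrm{const})^{|\gamma|}\,n^{-(e-V_\gamma-C_\gamma)}$ is valid only for subgraphs that are small compared to $\Gamma$; it breaks down when $|\gamma|$ is of order $n$. Indeed, for $\gamma=\Gamma$ itself (all variable nodes at induced degree $l$, all checks at degree $r$) one has $N_\tau=1$ deterministically, whereas your bound predicts something super-exponentially small. So the ``bad type'' branch of your dichotomy fails precisely at the most dangerous point. This is also why the hypothesis $l$ odd---which your argument never uses---is essential: when $l$ is odd, the maximal variable degree $l$ is odd, so the activity bound \eqref{eq: bound activity} still yields a factor $h^{n_l}$, and $K(\Gamma)$ (and nearby types) is of order $h^{n}$, exponentially small; when $l$ is even, $K(\Gamma)\approx 1$ and the contribution is not small. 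The paper's proof handles this by using McKay's sharp asymptotics for $\mathbb{E}_\Gamma[|\Omega_\Gamma(\underline n,\underline m)|]$ across the whole domain $\Delta$, and then shows---specifically for $l$ odd and $l<r$---that wherever $\overline K(\underline n,\underline m)$ is not exponentially small the subgraph count is only subexponential, and conversely. Your cyclomatic-excess argument captures the right intuition in the interior of the type simplex but misses the boundary near the full graph; replacing the heuristic $n^{-(e-V-C)}$ by the McKay bounds, and tracking where $l$ odd enters, is what closes the argument.
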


\begin{proof}[Sketch of Proof]
Let $\Omega_{\Gamma}\left(  \underline{n},\underline
{m}\right)$ be the set of all $g\subset\Gamma$ with prescribed type $(\underline n(g), \underline m(g))$.
By (\ref{eq: bound activity}) and the
Markov bound
\begin{align}
\mathbb{P} & \left[  \sum_{g\subset\Gamma {~\rm with~} \vert g\vert\geq\lambda n}
\left\vert K\left(  g\right)  \right\vert \geq\delta\right]   \nonumber\\
&  \leq\frac{1}{\delta}\sum_{\vec{n},\vec{m}\in\Delta}\overline{K}\left(
\underline{n},\underline{m}\right)  \mathbb{E}_\Gamma\left[  \left\vert
\Omega_{\Gamma}\left(  \underline{n},\underline{m}\right)  \right\vert \right]  ,
\label{eq:activity_vs_entropy}%
\end{align}
Notice that the probability in \eqref{eq:activity_vs_entropy} is an upper bound on the probability in \eqref{eq:proba_large_loop}.
In \eqref{eq:activity_vs_entropy} we have 
\begin{align}
\Delta\equiv\biggl\{\left(  \underline
{n},\underline{m}\right)  \mid & \lambda n\leq\sum_{s=2}^{l}n_{s}+\sum_{t=2}^{r}m_{t},
\sum_{s=2}^{l}sn_{s}=\sum_{t=2}^{r}tm_{t}, \nonumber \\
&
\sum_{s=2}^{l}n_{s}<n,
\sum_{t=2}^{r}m_{t}<nl/r\biggr\}. 
\end{align}
The expectation of the number of $g\subset\Gamma$ with prescribed type can be estimated by combinatorial bounds provided by 
McKay \cite{McKay}. It turns out that these subgraphs proliferate exponentially in $n$ only 
for a subdomain of $\Delta$ where $\overline{K}\left(
\underline{n},\underline{m}\right)$ is exponentially smaller in $n$. In the subdomain where $\overline{K}\left(
\underline{n},\underline{m}\right)$ is not small (but it is always bounded) the number of subgraphs is subexponential when $l$ is odd and $l<r$.
As a consequence {\it for $l$ odd and $l<r$}, we are able to prove that the sum on the right hand side of \eqref{eq:activity_vs_entropy} is smaller than 
$e^{-Cn}$. Unfortunately our estimates break down for $l$ even.
\end{proof}

\section{Sketch of
Proof of Theorems \ref{theorem1} and \ref{second-theorem}}\label{sec:End of the proof}

We write
\begin{equation}
 \frac{1}{n}\ln\biggl\{\sum_{g\subset \Gamma} K(g)\biggr\} = \frac{1}{n}\ln Z_p(\underline\eta, \widehat{\underline\eta}) +  
\frac{1}{n}\ln \biggl( 1 + \frac{R(\underline\eta, \widehat{\underline\eta})}{Z_p(\underline\eta, \widehat{\underline\eta})}\biggr)\,.
\end{equation}

We first look at the second contribution coming from large polymers.
From corollary \ref{lemmepol} and the Markov bound, we have for any $\epsilon>0$,
\begin{equation}\label{strange}
 \mathbb{P}[ e^{-n\epsilon}\leq \frac{1}{Z_p(\underline\eta, \widehat{\underline\eta})}\leq e^{n\epsilon}] =1 - \frac{1}{\epsilon}O(n^{-(l(1-\kappa)-1)})
\end{equation}
Using inequalities \eqref{eq:proba_large_loop} and \eqref{strange}, and choosing $\delta=e^{-2n\epsilon}$ it is not difficult to show that 
(at this point one takes $2\epsilon < C$)
\begin{equation}
 \mathbb{P}\biggl[\biggl\vert\frac{R(\underline\eta, \widehat{\underline\eta})}{Z_p(\underline\eta, \widehat{\underline\eta})}\biggr\vert\geq 
e^{-n\epsilon}\biggr]\leq 
\frac{1}{\epsilon} O(n^{-(l(1-\kappa)-1)}) + e^{-n(C-2\epsilon)}\,.
\end{equation}
This allows to conclude that with probability $1- \frac{1}{\epsilon} O(n^{-(l(1-\kappa)-1)})$ 
\begin{equation}
 \frac{1}{n}\ln \biggl( 1 + \frac{R(\underline\eta, \widehat{\underline\eta})}{Z_p(\underline\eta, \widehat{\underline\eta})}\biggr)
= O(e^{-n\epsilon})\,.
\end{equation}
This already proves theorem \ref{second-theorem}.

It is now easy to show theorem \ref{theorem1}. 
There is a probability $O(n^{-(l(1-\kappa)-1)})$ that this last term is not small. However we can always show it is bounded by a constant independent 
of $n$. Indeed it is equal to the difference $n^{-1}\ln Z - f_{\rm Bethe}(\underline\eta, \widehat{\underline\eta}) - 
n^{-1}\ln Z_p(\underline\eta, \widehat{\underline\eta})$ where each term separately can be shown to be bounded by a constant independent of $n$. Furthermore, corollary \ref{lemmepol} tells us that the expectation of the absolute value of the first term on the r.h.s 
is $O(n^{-(l(1-\kappa)-1)})$. 
Combining these remarks allows to conclude the proof of theorem \ref{theorem1}.

\section{Appendix}\label{appendix}

We have 
\begin{equation}
K(g) = \prod_{i\in g\cap V}K_{i}\prod_{a\in g\cap C}K_{a}
\end{equation}
Quantities $K_{a},K_{i}$ are local
and can be computed only with BP messages. Let $m_i= \tanh(h_i+\sum_{a\in \partial i} \widehat{\eta}_{a\to i})$.
\begin{align}
K_{i}  =
\frac{(1-m_i)^{d_i(g) -1} + (-1)^{d_i(g)-1}(1+m_i)^{d_i(g) -1}}{2(1-m_i^2)^{d_i(g) -1}}
\end{align}
\begin{align}
& K_{a}  =\prod_{i\in\partial a\cap g}\sqrt{\frac{1-\tanh^2
\eta_{i\rightarrow a}}{1-\prod_{j\in\partial a\setminus i}
\tanh^2\eta_{j\rightarrow a}}}\prod_{i\in\partial a\cap g^c}\tanh\eta_{i\rightarrow a}
\nonumber \\ &
\times
\frac{1+\left(  -1\right)
^{d_a(g)}\prod_{i\in\partial a}
\tanh^{d_a(g) -1}\eta_{i\rightarrow a} }{1+\prod_{i\in\partial a}
\tanh\eta_{i\rightarrow a}}
\prod_{i\in\partial a\cap g} \sqrt{1-m_i^2}
\end{align}
Using these formulas and the BP equations we derive the following estimate for $\vert h_i\vert < h_0$ small enough
\begin{equation}\label{actibound}
\vert K(g)\vert \leq \overline{K}(\underline{n}(g), \underline{m}(g))
\end{equation}
where
\begin{align}
&  \overline{K}(\underline{n}(g), \underline{m}(g)) = \left(
1-\alpha_{r}rh^{2}\right)  ^{m_{r}(g)}\prod_{t=2}^{r-1}\left(  \alpha_{t}%
h^{r-t}\right)  ^{m_{t}(g)}\nonumber\\
&  \times\prod_{\substack{s=2,\\\mathrm{even}}}^{l-1}\left(  1+\frac{\beta
_{s}}{2}s\left(  s-1\right)  h^{2}\right)  ^{n_{s}(g)}\prod
_{\substack{s=3,\\\mathrm{odd}}}^{l}\left(  \beta_{s}\left(  s-1\right)
h\right)  ^{n_{s}(g)}. \label{eq: bound activity}%
\end{align}
Here $0<\alpha_{r}<1$, $\alpha_{t}>1$, $\beta_{t}>1$ are fixed numerical constants (that we can take close to $1$).
Estimate (\ref{eq: bound activity}) is essentially optimal for small $h$ as
can be checked by Taylor expanding $K(g)$ in powers of
$h_{i}$.

\vskip 0.35cm
{\bf Acknowlegment.} The work of M.V was supported by the Swiss National
Science Foundation grant no 200021-121903. 
\vskip 0.35cm

\end{document}